\journalname{Eur. Phys. J. C}
\begin{document}

\title{Bound orbits near scalar field naked singularities}

\author{I.\,M. Potashov\thanksref{addr1,e1} \and
Ju.\,V. Tchemarina\thanksref{addr1,e2}      \and
A.\,N. Tsirulev\thanksref{addr1,e3}       }

\thankstext{e1}{e-mail: potashov.im@tversu.ru}
\thankstext{e2}{e-mail: chemarina.yv@tversu.ru}
\thankstext{e3}{e-mail: tsirulev.an@tversu.ru}

\institute{Faculty of Mathematics, Tver State University, Sadovyi 35, Tver, Russia \label{addr1}
}

\date{Received: date / Accepted: date}

\maketitle

\begin{abstract}
We study bound orbits near the centres of static, spherically symmetric, asymptotically flat configurations of a self-gravitating scalar field minimally coupled to gravity. In our approach, a nonlinear scalar field is considered as an idealized model of dark matter, and the main examples that we have in mind are the centres of galaxies. We consider both scalar field black holes (SFBHs) and scalar field naked singularities (SFNSs). It turns out that the shape and parameters of a bound orbit depend crucially on the type of configuration. The lapse metric function of a SFNS and, consequently, the effective potential of a massive test particle with zero angular momentum have a global minimum. A SFNS has a static degenerated orbit on which a test particle, having zero angular momentum and the minimum of its energy, remains at rest at all times. This implies that there exists a spherical shell consisting of cold gas or dust, which for a distant observer can look like the shadow of a black hole. We also study the shape of noncircular bound orbits close to the centres of SFNSs and show that their angles of precession are negative.
\end{abstract}


\section{Introduction}

At present, we still do not exactly know the geometry of spacetime in the neighbourhood of the centre of normal galaxies. The best observational results have been obtained for the centres of our Galaxy and of M87 \cite{Meyer2012,Akiyama2015,Fish2016, Gillessen2017,Goddi2017,Hees2017}. However, the available data are so far insufficient to identify these strongly gravitating objects and even to definitely distinguish between black holes, naked singularities, boson stars, and wormholes~\cite
{Joshi2014,Macedo2013,Li2014,Grould2017a, Grould2017b,Abuter2018}. The EHT collaboration has recently observed~\cite{Akiyama2018} the shadow, circled by a bright ring, in the centre of M87: this can be interpreted as the existence of closed photon orbits. However, it is shown in~\cite{Shaikh2018} using a simple model that a naked singularity can have both a shadow and a photon sphere. In fact, observations of the orbits of stars near the centres have a key role in dealing with this question, but there are some obvious problems with the reasonable geometrical interpretation of such observations. First, one should not think of the central objects in galaxies as being in vacuum, because dark matter is mainly concentrated around them. Another problem is that the nature of dark matter and its distribution near galactic centres remain unknown at present. This means that a meaningful interpretation of the observations should be based on an appropriate mathematical model of the central regions.

In this paper, we model dark matter by a nonlinear scalar field which is assumed to be minimally coupled to gravity. This model is an interesting alternative to the cold dark matter phenomenology (see e.g. \cite{Lee1996, Matos2004, Robles2012, Benisty2017, Bernal2019} and references therein). In the centres of galaxies, the distribution of dark matter seems to be spherically symmetric. Our aim is to study bound orbits of test particles near the centres of static, spherically symmetric configurations of a self-gravitating nonlinear scalar field; in our terminology, a bound orbit is a bounded \textit{complete} geodesic in an asymptotically flat spacetime. The motivation for the choice of this special and idealized model is to treat the problem in a fully analytical and self-consistent manner. For the orbital motion of a test particle near a compact hairy object, this model allows us to find out some important features, which can possibly be observed in real astrophysical systems, and which appear to be hidden in a purely numerical analysis.

When we speak about SFNSs and SFBHs, we are faced with some unsolved problems, such as the cosmic censorship conjecture, instability of configurations, violation of the nonnegativity of self-interaction potentials, and so on. However, a number of these problems, including the singularity problem, disappear if we assume that there is some additional matter (perhaps of an unknown nature, beyond the Standard Model) in the very centre. Note also that we regard, in what follows, a scalar field as a phenomenological construction rather than a fundamental natural field;  on the other hand, the Standard Model of particle physics predicts the existence of such a field, and it is of great importance for the modern cosmology. In both the cases, we do not know (in advance, without appealing to astrophysical observations) the self-interaction potential of the scalar field. Therefore, in order to consider the problem in a sufficiently general approach, we use the so-called inverse problem method for self-gravitating spherically symmetric scalar fields; the method was proposed in
\cite{Lechtenfeld1995,Lechtenfeld1998, BronnikovShikin2002}
and later was explored in
\cite{Tchemarina2009,Azreg2010,Cadoni2011, Solovyev2012,Cadoni2018} and applied, for example, in \cite{BronnikovChernakova2007, Nikonov2008,Franzin2018}.

The paper is organised as follows. In Sect.~\ref{Sec2} we describe the necessary mathematical background for static, spherically symmetric scalar field configurations, and obtain some general results both for SFBHs and for SFNSs. One of the most important issues is how to distinguish a naked singularity from a black hole using available observational data. On the other hand, our primary purpose is the study of geodesic motion in these asymptotically flat spacetimes, and in Sect.~\ref{Sec3} we discuss general features of bound orbits of free massive particles, such as the shape of the orbits and the precession of their pericentres. In Sect.~\ref{Sec4} we consider a simple, fully analytical, one-parameter family of SFNSs and study the shapes of bound orbits in these spacetimes in comparison with the orbits in the corresponding Schwarzschild black hole spacetimes.

In this paper, we use the geometrical system of units with $G=c=1$ and adopt the metric signature $(+\,-\,-\,-)$.

\section{Comparing SFBHs and SFNSs}
\label{Sec2}

The action with the minimal coupling between curvature and a real scalar field $\phi$ has the form
\begin{equation}\label{action}
    \Sigma=\frac{1}{8\pi}\int\left(-\frac{1}{2}R + \,\langle d\phi,d\phi\rangle-2V(\phi)\right) \sqrt[]{|g|}\,d^{\,4}x\,,
\end{equation}
where $R$ is the scalar curvature, $V(\phi)$ is a self-interaction potential, and the angle brackets denote the scalar product with respect to the spacetime metric. For our purposes, it is convenient to write the metric of a spherically symmetric spacetime in the Schwarzschild-like coordinates as
\begin{equation}\label{metric}
    ds^2=A dt^2- \frac{\,dr^2}{f}- r^{2}(d\theta^2+\sin^{2}\!\theta\, d\varphi^2),
\end{equation}
where the metric functions $A$ and $f$ depend only on the radial coordinate $r$. Writing $$A(r)=f(r)\mathrm{e}^{2\Phi(r)},$$
we obtain the Einstein-Klein-Gordon equations in the form
\begin{equation}\label{EKG1}
-\frac{f'}{r}-\frac{f-1}{r^2}={\phi'}^2 f + 2V\,,
\end{equation}
\begin{equation}\label{EKG2}
\frac{f}{r}\left(\!2\Phi'+\frac{f'}{f}\right)+\frac{f-1}{r^2}=
{\phi'}^2 f - 2V\,,
\end{equation}
\begin{equation}\label{EKG3}
-f\phi''-\frac{\phi'}{2}f'-\phi' f\left(\!\Phi' +
\frac{1}{2}\frac{f'}{f}+\frac{2}{r}\right)+ \frac{dV}{d\phi}=0\,,
\end{equation}
where a prime denotes differentiation with respect to $r$. Now these equations can be reduced to two independent ones by summing equations (\ref{EKG1}) and (\ref{EKG2})\, (with the result $\Phi'=r{\phi'}^2$), and then eliminating $\Phi'$ from (\ref{EKG3}).

The functions $A$ and $f$, which completely determine geodesic motion in the spacetime, should be the result of solving equations (\ref{EKG1})\,--\,(\ref{EKG3}). However, we have no a priori knowledge of the form of $V(\phi)$ and have to study bound orbits for all physically admissible self-interaction potentials at once. Our approach is based on the following result~\cite{Solovyev2012}:

\textit{A general static, spherically symmetric, asymptotically flat solution of equations (\ref{EKG1})\,--\,(\ref{EKG3}) with an arbitrary self-interaction potential is given by the quadratures}
\begin{equation}\label{F-xi}
\Phi(r)=-\!\int_{r}^{\,\infty} {\phi'}^{2}rdr\,,\quad \xi(r)=r+\int_{r}^{\,\infty}
\left(1-\mathrm{e}^{\Phi}\right)\!dr\,,
\end{equation}
\begin{equation}\label{A-f}
A(r)=2r^{2}\!\int_{r}^{\,\infty} \frac{\,\xi-3m}{\,r^4}\,\mathrm{e}^{\Phi}dr\,, \quad f(r)=\mathrm{e}^{-2\Phi}A\,,
\end{equation}
\begin{equation}\label{V}
\widetilde{V}(r)=\frac{1}{2r^2}\!\left(1-3f+ r^2{\phi'}^{2}\!f+ 2\,\mathrm{e}^{-\Phi}\,\frac{\,\xi-3m}{r}\right),
\end{equation}
\textit{where the parameter $m$ is the Schwarzschild mass}.

\noindent
Each strictly monotonic\footnote{Note, however, that each solution of the problem under consideration satisfies the quadratures (\ref{F-xi})\,--\,(\ref{V}) regardless of the monotonicity of the field function.} function $\phi(r)$ of class $C^2\big([0,\infty)\big)$ with the asymptotic behaviour
\begin{equation}\label{phi}
\phi= O\!\left(r^{-1/2-\textstyle{\alpha}}\right)\!,\;\;\, r\rightarrow\infty\qquad (\alpha>0)
\end{equation}
determines a one-parameter family of solutions by these quadratures: one sequentially finds the functions $\mathrm{e}^{\Phi}$, $\xi$, $A$, $f$, and $\widetilde{V}(r)$, and then find the potential $V(\phi)=\widetilde{V}(r(\phi))$. Note also that we could include the cosmological constant in the potential as the additional term $\Lambda/2$, but its contribution to the geometry of the central region would be negligible. The absence of the cosmological constant simply means that $V(\phi(\infty))=0$.

In spherically symmetric spacetimes, the Kretchmann invariant, $K= R_{\alpha\beta\gamma\delta} R^{\alpha\beta\gamma\delta}/4$, equals the sum of the squared curvature components. In the orthonormal basis associated with the metric~(\ref{metric}), algebraically independent components of the curvature can be reduced with the aid of (\ref{F-xi}) and (\ref{A-f}) to the form
\begin{eqnarray}
R_{0101}&=& {\phi'}^{2}f-\frac{f-1}{r^2}, \quad R_{2323}=\frac{f-1}{r^2},
\label{R1}\\
R_{0202}&=&R_{0303}= -\frac{f}{r^2}+ \mathrm{e}^{-\Phi}\,\frac{\xi-3m}{r^3},
\label{R2}\\
R_{1212}&=&R_{1313}=\frac{f}{r^2}- {\phi'}^{2}f- \mathrm{e}^{-\Phi}\,\frac{\xi-3m}{r^3},
\label{R3}
\end{eqnarray}
One can see that with the exception of some special fine-tuned cases, $K$ and $R$ diverge at $r=0$ for all solutions.  In the generally accepted manner, we call a solution \textit{a naked singularity} (\textit{a black hole}) if $K$ diverges at $r=0$ and $f>0$ for all $r>0$ \big(respectively, $f=0$ at some radius $r_{\!\vphantom{\hat{A}}h}>0$ and $f>0$ for all $r>r_{\!\vphantom{\hat{A}}h}$\big).

For a given nonzero scalar field $\phi(r)$, it follows directly from~(\ref{F-xi}) that $\xi'=\mathrm{e}^{\Phi}>0$ for all $r>0$ and $\xi(0)>0$, so that the metric function $A$, given by the quadrature~(\ref{A-f}), passes through zero and becomes negative as $\,r\rightarrow0\,$ if and only if $\,\,3m>\xi(0)$.
In other words, the corresponding configuration of mass $m$ will be a naked singularity or a black hole if
\begin{equation}\label{3m<xi0}
0<3m<\xi(0)\quad \mbox{(naked singularities)}
\end{equation}
or
\begin{equation}\label{3m>xi0}
3m>\xi(0)\quad \mbox{(black holes)}
\end{equation}
respectively. In what follows we deal only with 'generic' configurations and do not consider the special (fine-tuned) case $3m=\xi(0)$; the latter leads to a naked singularity or a regular solution.

It follows from~(\ref{F-xi}) and (\ref{phi}) that
\begin{eqnarray}\label{est1}
\mathrm{e}^{\Phi}&=&1+o(1/r),\quad \xi\!=r+o(1),\;\;\; r\rightarrow\infty,\vphantom{\int}\\
\label{est2}
\mathrm{e}^{\Phi}&=&\mathrm{e}^{\Phi(0)}+o(r),\;\; \xi\!=\xi(0)+\xi'(0)r+o(r),\;\; r\rightarrow0,
\end{eqnarray}
where $\xi(0)>0$ and $\xi'(0)=\mathrm{e}^{\Phi(0)}>0$ if the scalar field is not identically zero. Using the quadrature~(\ref{A-f}) and the asymptotic estimates (\ref{est1}) and (\ref{est2}), one can directly obtain the asymptotic behaviours of the metric function $A(r)$ for both SFNSs and SFBHs; they are
\begin{eqnarray}\label{asymp1}
A(r)&=&1-\frac{2m}{r}\vphantom{\int\limits_a}+ o(1/r),\;\; r\rightarrow\infty,\\
\label{asymp2}
A(r)&=&\frac{2}{3}\:\! \frac{\xi(0)-3m}{r}\,\mathrm{e}^{\Phi(0)}\,+\, O(1),\;\; r\rightarrow0.
\end{eqnarray}
The condition $m>0$ distinguishes SFNSs from the vacuum (that is, Schwarzschild) naked singularities, which exist only for negative values of mass. Below we will also need a few inequalities and expressions for $\xi$ and $A$. It follows directly from~(\ref{F-xi}) and~(\ref{A-f}) that for all $r>0$
\begin{eqnarray}\label{cond1}
\xi\,\,&\!\!\!\geq&r\,,\;\;\; 0<\xi'\leq1,\;\;\;  0<\mathrm{e}^{\Phi}\leq1,\;\;\; \xi''= \Big(\!\mathrm{e}^{\Phi}\Big)^{\!\prime}\geq0\,,\\
\label{A'}
A'&=& \frac{2}{r}\,A- 2\,\frac{\xi-3m}{r^2}\,\mathrm{e}^{\Phi}\,,\\
\nonumber
A''&=& \frac{2}{r}\,A'- \frac{2}{r^2}\,A + \frac{2}{r^2}\,\frac{\xi-3m}{r}\,\mathrm{e}^{\Phi}- \frac{2}{r} \left(\frac{\xi-3m}{r}\,\mathrm{e}^{\Phi}\right)'\\
\label{A"}
&=& \frac{A'}{r}- \frac{2}{r} \left(\frac{\xi-3m}{r}\,\mathrm{e}^{\Phi}\right)'.
\end{eqnarray}

The two following propositions give us an additional characteristic feature, besides the existence or non-existence of an event horizon and an innermost stable circular orbit, that distinguishes SFNSs from SFBHs.

\begin{proposition}\label{proposition1}
In a SFBH spacetime defined by the quadratures (\ref{F-xi})\,--\,(\ref{V}) and the conditions~(\ref{phi}) and (\ref{3m>xi0}), $A(r)$ is a strictly increasing function outside the event horizon.
\end{proposition}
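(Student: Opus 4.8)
The plan is to rewrite the derivative as $A'=\frac{2}{r}\bigl(A-h\bigr)$ with $h:=\dfrac{\xi-3m}{r}\,\mathrm{e}^{\Phi}$, and to prove $A-h>0$ everywhere outside the horizon by splitting the problem at the radius $r_{*}$ at which $\xi=3m$. Such a radius exists and is unique, because by the black-hole condition~(\ref{3m>xi0}) we have $\xi(0)<3m$ while $\xi$ is continuous and strictly increasing, by~(\ref{cond1}), with $\xi\to\infty$; moreover $\xi(s)-3m>0$ for $s>r_{*}$, so the quadrature~(\ref{A-f}) gives $A(r)>0$ for every $r\ge r_{*}$, and since $A(r_{h})=0$ this forces $r_{h}<r_{*}$.

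On the interval $(r_{h},r_{*}]$ the statement is immediate: there $A>0$ (the defining property of a SFBH) while $\xi-3m\le0$ makes $h\le0$, so $A-h>0$ and $A'>0$; the same argument at $r=r_{h}$, where $A=0$ and $h<0$ (again because $r_{h}<r_{*}$), gives $A'(r_{h})=-2h(r_{h})/r_{h}>0$.

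The substantive step is the estimate on $[r_{*},\infty)$, and here I would not try to read off the sign of $A'$ pointwise from~(\ref{A-f}), whose integrand changes sign at $r_{*}$. Instead, writing~(\ref{A-f}) as $A(r)=2r^{2}\int_{r}^{\infty}h(s)\,s^{-3}\,ds$ and integrating by parts once (the boundary term at infinity vanishes since $h\to1$, by~(\ref{est1})), one gets $A(r)=h(r)+r^{2}\int_{r}^{\infty}h'(s)\,s^{-2}\,ds$, hence $A'(r)=\frac{2}{r}\bigl(A-h\bigr)=2r\int_{r}^{\infty}h'(s)\,s^{-2}\,ds$. It therefore suffices to show $h'(s)>0$ for $s\ge r_{*}$. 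Using $\xi'=\mathrm{e}^{\Phi}$ and $(\mathrm{e}^{\Phi})'=\xi''\ge0$ from~(\ref{cond1}), a short computation gives
\[
h'=\frac{\mathrm{e}^{2\Phi}}{s}+\frac{(\xi-3m)\,\xi''}{s}-\frac{(\xi-3m)\,\mathrm{e}^{\Phi}}{s^{2}}\,.
\]
For $s\ge r_{*}$ the middle term is nonnegative, so $h'\ge \mathrm{e}^{\Phi}\bigl(s\,\mathrm{e}^{\Phi}-(\xi-3m)\bigr)/s^{2}$, and the sign of the right-hand side is that of $3m-q(s)$, where $q(s):=\xi(s)-s\,\xi'(s)=\xi(s)-s\,\mathrm{e}^{\Phi(s)}$. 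Now $q'=-s\,\xi''\le0$, so $q$ is non-increasing, and $q(r_{*})=3m-r_{*}\mathrm{e}^{\Phi(r_{*})}<3m$; hence $q(s)<3m$, i.e.\ $h'(s)>0$, for every $s\ge r_{*}$. Combining this with the easy interval $(r_{h},r_{*}]$ gives $A'>0$ on $[r_{h},\infty)$.

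The one genuinely delicate point is that the factor $s\,\xi''-\mathrm{e}^{\Phi}$ (equivalently $s^{2}{\phi'}^{2}-1$) that enters $h'$ has no definite sign, so a naive pointwise positivity argument fails; what rescues the estimate is the monotonicity of $q=\xi-s\,\xi'$ together with the black-hole hypothesis, which is exactly what forces $q(r_{*})<3m$ and $r_{h}<r_{*}$. The remaining items---the single integration by parts, the vanishing of the boundary term, and the convergence of $\int^{\infty}h'(s)\,s^{-2}\,ds$ (which rest on the decay~(\ref{phi}) and the asymptotics~(\ref{est1}))---are routine.
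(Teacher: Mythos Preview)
Your proof is correct, and its overall architecture matches the paper's: both reduce the claim to $A>h$ with $h=(\xi-3m)\mathrm{e}^{\Phi}/r$, both dispose of the region $\xi-3m\le0$ trivially, and both rely on the key inequality $r\mathrm{e}^{\Phi}>\xi-3m$ (equivalently, your $q(r)=\xi-r\xi'<3m$) to handle $r>r_{*}$. The derivations of that inequality are in fact the same computation in two guises: the paper writes $r\mathrm{e}^{\Phi}=\xi-\xi(0)+\int_{0}^{r}(\mathrm{e}^{\Phi})'s\,ds$ and invokes $\xi(0)<3m$, whereas you differentiate $q$ and invoke $q(r_{*})<3m$; note that your monotonicity of $q$ with $q(0)=\xi(0)$ recovers the paper's estimate for all $r>0$, not merely $r\ge r_{*}$. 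Where the two genuinely differ is the final step: the paper integrates by parts twice in~(\ref{A-f}) to obtain $A=\tfrac{2}{3}h+\tfrac{1}{3}\mathrm{e}^{2\Phi}+(\text{nonnegative terms})$ and then uses $\mathrm{e}^{2\Phi}>h$; you integrate by parts once to get $A-h=r^{2}\int_{r}^{\infty}h'(s)s^{-2}\,ds$ and show $h'>0$ on $[r_{*},\infty)$. Your route is a bit more economical and makes the role of $h$ explicit; the paper's decomposition, on the other hand, exhibits the individual positive contributions to $A$ more transparently.
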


\begin{proof}
In the region where $\xi-3m\,\leq\,0$, the monotonicity (outside the horizon) follows directly from (\ref{A'}), so we need to consider only the region where $\xi-3m>0$.

An integration by parts in (\ref{F-xi}) yields
\begin{eqnarray}
\xi(r)=r+\int_{r}^{\,\infty} \left(1\!-\mathrm{e}^{\Phi}\right)\!dr = r\mathrm{e}^{\Phi}+ \int_{r}^{\,\infty} \Big(\!\mathrm{e}^{\Phi}\Big)^{\!\prime}rdr
\nonumber\\
=r\mathrm{e}^{\Phi}+ \int_{0}^{\,\infty} \Big(\!\mathrm{e}^{\Phi}\Big)^{\!\prime}rdr- \int_{0}^{\,r} \Big(\!\mathrm{e}^{\Phi}\Big)^{\!\prime}rdr
\nonumber\\
=r\mathrm{e}^{\Phi}+ \xi(0)- \int_{0}^{\,r} \Big(\!\mathrm{e}^{\Phi}\Big)^{\!\prime}rdr\,,
\end{eqnarray}
\begin{eqnarray}
r\mathrm{e}^{\Phi}=\xi(r)-\xi(0)+\int_{0}^{\,r} \Big(\!\mathrm{e}^{\Phi}\Big)^{\!\prime}rdr\,. \nonumber
\end{eqnarray}
Thus, for SFBHs ($\xi(0)<3m$), we have the inequality
\begin{equation}\label{ineq1}
\mathrm{e}^{\Phi}\,>\,\frac{\xi-3m}{r} \quad\text{for all}\;\; r>0\,.
\end{equation}

Integrating by parts in (\ref{A-f}) and then applying the identity $(\xi-3m)'=\mathrm{e}^{\Phi}$, we obtain
\begin{eqnarray}
A&=&2r^{2}\!\int_{r}^{\,\infty} \frac{\,\xi-3m}{\,r^4}\,\mathrm{e}^{\Phi}dr
\vphantom{\int\limits_{r}^{\infty}}
\nonumber\\
&=&\frac{2}{3} \frac{\,\xi-3m}{\,r}\mathrm{e}^{\Phi}+ \frac{2r^{2}}{3}\!\int_{r}^{\,\infty}\! \frac{\big[(\xi\!-\!3m) \mathrm{e}^{\Phi}\big]^{\!\prime}} {\,r^3}\,dr \vphantom{\int\limits_{r}^{\infty}}
\nonumber\\
&=&\frac{2}{3} \frac{\,\xi-3m}{r}\,\mathrm{e}^{\Phi}+\, \frac{\mathrm{e}^{2\Phi}}{3}\,+\, \frac{\,r^{2}}{3}\!\int_{r}^{\,\infty}
\Big(\!\mathrm{e}^{2\Phi}\Big)^{\!\prime}\frac{\,dr} {\,r^2}
\vphantom{\int\limits_{r}^{\infty}}
\nonumber\\
&&\qquad\qquad\quad\! \;+\,\frac{\,2r^{2}}{3}\!\int_{r}^{\,\infty} \frac{\,\xi-3m}{r^3} \Big(\!\mathrm{e}^{\Phi}\Big)^{\!\prime}dr\,.
\label{Byparts}
\end{eqnarray}
Taking into account~(\ref{cond1}) and~(\ref{ineq1}), we have
\begin{equation}\label{ineq1}
A(r)\,>\,\frac{\xi-3m}{r}\mathrm{e}^{\Phi} \quad\text{for all}\;\; r>0\,.
\nonumber
\end{equation}

As it can be seen from (\ref{A'}), this inequality implies that $A'>0$ in the region where $\xi-3m>0$. \qed
\end{proof}

For a SFNS, the existence of a minimum follows directly from~(\ref{3m<xi0}), (\ref{asymp1}), and~(\ref{asymp2}). Suppose that there are other extrema. Let $r_0$ be the local minimum point farthest from the origin $r=0$, and $r_m$ be the local maximum point nearest to $r_0$, that is, there are no local extrema in the interval $(r_m,r_0)$. Then $A'(r_0)=A'(r_m)=0$, $A''(r_0)>0$, and $A''(r_m)<0$, so that~(\ref{A"}) gives
\begin{equation}\label{cond2}
\left(\frac{\xi-3m}{r}\, \mathrm{e}^{\Phi}\right)'_{r=r_0}< 0\,,\;\;\; \left(\frac{\xi-3m}{r}\, \mathrm{e}^{\Phi}\right)'_{r=r_m}> 0\,.
\nonumber
\end{equation}
Consequently, the minimum at $r=r_0$ is unique if
\begin{equation}\label{cond3}
\left(\frac{\xi-3m}{r}\, \mathrm{e}^{\Phi}\right)'\leq 0\,,\quad 0<r<r_0\,.
\end{equation}
It is important to note that this condition is quite natural and means simply that the configurations of a scalar field is sufficiently compact and the function $\mathrm{e}^{\Phi}$ varies sufficiently smoothly (has no steep jumps). A more detailed characterization in terms of the field function $\phi$ turns out to be still more complicated, but numerous example show that the condition~(\ref{cond3}) holds for physically reasonable configurations. Thus we have stated the following proposition.

\begin{proposition}\label{proposition2}
In a SFNS spacetime defined by the quadratures (\ref{F-xi})\,--\,(\ref{V}) and the conditions (\ref{phi}) and~(\ref{3m<xi0}), the metric\linebreak function $A(r)$ has a global minimum in the interval $(0,\infty)$. If the condition~(\ref{cond3}) holds, this minimum is unique.
\end{proposition}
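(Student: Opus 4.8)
The plan is to split the statement into an existence claim and a uniqueness claim and to prove them in that order. For existence, the point is that the behaviour of $A$ at the two ends of $(0,\infty)$, already recorded above, forces an interior minimum. Since the configuration is a SFNS we have $0<3m<\xi(0)$ by~(\ref{3m<xi0}), so the leading coefficient $\frac23(\xi(0)-3m)\mathrm{e}^{\Phi(0)}$ in~(\ref{asymp2}) is strictly positive and hence $A(r)\to+\infty$ as $r\to0^{+}$. At the other end, (\ref{asymp1}) together with $m>0$ gives $A(r)=1-2m/r+o(1/r)<1$ for all sufficiently large $r$, so I may fix $r_{1}$ with $A(r_{1})<1$. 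Because $A\to+\infty$ at the origin there is $\varepsilon\in(0,r_{1})$ with $A>1$ on $(0,\varepsilon]$, and because $A\to1>A(r_{1})$ at infinity there is $R>r_{1}$ with $A(r)>A(r_{1})$ for all $r\ge R$. On the compact interval $[\varepsilon,R]$ the continuous function $A$ attains its minimum at some $r_{*}$, with $A(r_{*})\le A(r_{1})$; since $A>A(r_{*})$ also on $(0,\varepsilon]$ and on $[R,\infty)$, the point $r_{*}$ is a global minimum of $A$ on $(0,\infty)$.

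For uniqueness under~(\ref{cond3}) I would argue by contradiction using the second--derivative identity~(\ref{A"}). From~(\ref{A'}) and the asymptotics~(\ref{asymp2}) one gets $A'<0$ near $r=0$, and from~(\ref{A'}) together with~(\ref{est1}) and~(\ref{asymp1}) one gets $A'>0$ for large $r$; thus $A$ is decreasing at the left end and increasing at the right end, so if its (isolated) critical points are listed in increasing order they alternate between minima and maxima, beginning and ending with a minimum. If there were more than one critical point there would be at least one local maximum; let $r_{0}$ be the local minimum farthest from the origin and $r_{m}<r_{0}$ the local maximum immediately preceding it, so that $A'(r_{0})=A'(r_{m})=0$ while $A''(r_{0})>0$ and $A''(r_{m})<0$. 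Setting $A'=0$ in~(\ref{A"}) gives $A''=-\frac{2}{r}\bigl(\frac{\xi-3m}{r}\,\mathrm{e}^{\Phi}\bigr)'$ at both points, whence $\bigl(\frac{\xi-3m}{r}\,\mathrm{e}^{\Phi}\bigr)'>0$ at $r_{m}$ and $<0$ at $r_{0}$. This contradicts~(\ref{cond3}), which forces that derivative to be $\le0$ on all of $(0,r_{0})\ni r_{m}$. Hence $A$ has a single critical point, and by the existence part it is the global minimum.

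The asymptotic estimates and the compactness argument are routine; the step I expect to need the most care is the bookkeeping of the extrema in the uniqueness proof --- showing that the farthest local minimum is actually preceded by a local maximum, which rests on pinning down the sign of $A'$ at both ends of $(0,\infty)$ and on the fact that two consecutive critical points of a $C^{2}$ function cannot be of the same type. I would also stress, as the excerpt does, that this covers only the \emph{generic} case in which the critical points are isolated and nondegenerate ($A''\ne0$ there); the borderline degenerate situations would need a separate, finer analysis, but they are excluded once attention is restricted to `generic' configurations.
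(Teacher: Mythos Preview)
Your proposal is correct and follows essentially the same route as the paper: existence from the asymptotics~(\ref{asymp1})--(\ref{asymp2}) together with~(\ref{3m<xi0}), and uniqueness by contradiction via the second-derivative identity~(\ref{A"}) evaluated at the farthest local minimum $r_{0}$ and the adjacent local maximum $r_{m}$. The paper's argument is terser --- it simply states that existence ``follows directly'' and then posits $r_{m}$ without justification --- whereas you spell out the compactness step and the alternation of extrema from the signs of $A'$ at the endpoints; this extra care is appropriate, and your caveat about nondegeneracy matches the paper's restriction to `generic' configurations.
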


For simplicity, we will assume below that the metric function $A(r)$ \textit{has exactly one minimum} at $r=r_0$.

It is shown in the next section that these propositions give us a key distinguishing feature of scalar field configurations; in particular, it determines different behaviours of bound orbits around SFNSs on the one hand and around SFBHs on the other.

\section{Bound orbits}
\label{Sec3}

In any static, spherically symmetric spacetime a massive test particle has three integrals of motion. For the metric of the form~(\ref{metric}) they are
\begin{equation}\label{int-motion}
\frac{dt}{ds}=\frac{E}{A}\,,\quad
\frac{d\varphi}{ds}=\frac{J}{\,r^2}\,,\quad
\left(\frac{dr}{ds}\right)^{\!\!2}=  \mathrm{e}^{-2\Phi}\!\! \left(E^2- {V\vphantom{\underline{A}}}_{\!\!eff}\right), \end{equation}
\begin{equation}\label{Veff}
{V\vphantom{\underline{A}}}_{\!\!eff}= A\!\left(1+\frac{\,J^2}{\,r^2}\right),
\end{equation}
where $V\vphantom{\underline{A}}_{\!\!eff}$ is the effective potential, $E$ is the specific energy, and $J$ is the specific angular momentum of a massive test particle.

We note that $V_{eff}\rightarrow1$ as $r\rightarrow\infty$ for any value of $J$ and for any asymptotically flat spacetime, but the results obtained in Sect.~\ref{Sec2} show radically different behaviour of the effective potentials in the interior regions of SFBHs and SFNSs, which implies in turn a crucial distinction between geodesic structures of these spacetimes. In a SFBH spacetime as well as in a Schwarzschild spacetime, the effective potential of a test particle vanishes at the horizon and has, for sufficiently large $J$, at least one minimum and one maximum outside the horizon. The radius $r_{\!\scriptscriptstyle h}$ of the event horizon of a SFBH is always less than that of the vacuum black hole with the same mass (see Fig~\ref{fig1}). It follows directly from~(\ref{asymp2}) that $r_{\!\scriptscriptstyle h}\rightarrow0$ as $m\rightarrow\xi(0)/3+0$  and can be arbitrary close to zero. Moreover, the various numerical simulations with SFBH solutions allow us to conclude that the radius of the corresponding innermost stable circular orbit, which is an important observational characteristic for black holes, is of order $3r_{\!\scriptscriptstyle h}$  (analogously to the vacuum case). On the contrary, a SFNS has no innermost stable circular orbit but has a unique degenerated \textit{static orbit}, which has $J=0$ and is located at $r=r_0$, where $r_0$ is the unique solution of the equation $A'(r)=0$. From the point of view of a distant observer resting relative to the centre, a test particle remains at rest in the static orbit all time. Particles in such a static orbit, together with particles having zero or small specific angular momentum and specific energy $E^2\gtrapprox{}A(r_0)$, can make up a spherical shell consisting of cold gas or fluid. For a distant observer, this shell would look like a shadow similar to that of a black hole. The existence of static degenerated orbits in other spacetimes is considered in~\cite{Quevedo2011, Pugliese2012,  Vieira2014, Collodel2018}. We note also that in SFNS spacetimes, as opposed to black holes, a freely moving massive particle with $E\geq1$ will inevitably escape to infinity, while such a particle with $E<1$ will move on a bound geodesic; thus the value $E=1$ separates bound and unbound orbits.

Another important observational aspect of geodesic motion in the central regions of SFNSs is that orbits with sufficiently large $J$ can lie outside this shell and, at the same time, can have parameters which differ strongly from those for the corresponding orbits in vacuum and SFBHs spacetimes. For a given orbit in a SFNS spacetime, we are primarily interested in the angle of precession $\Delta\varphi$ of the orbit; the latter can be expressed as
\begin{equation}\label{precess}
\varphi_{\!{}_{\scriptscriptstyle osc}}=2J\!\int \limits_{\,r_{\!\scriptscriptstyle min}}^{\:r_{\!\scriptscriptstyle max}}\! \frac{\,\mathrm{e}^{\Phi}} {\,r^2\sqrt{E^2-V_{\!\scriptscriptstyle eff}\,}}\,dr\,,\quad \Delta\varphi\,=\,\varphi_{\!{}_{\scriptscriptstyle osc}}-2\pi\,,
\end{equation}
where $r_{\!\scriptscriptstyle min}$ and $r_{\!\scriptscriptstyle max}$ are, respectively, the pericentre and apocentre radii. In other words, they are solutions of the equation $E^2-V\vphantom{\underline{A}}_{\!\!eff}=0$ such that $r_{\!\scriptscriptstyle min}< r_{\!\scriptscriptstyle J}< r_{\!\scriptscriptstyle max}$, where $r_{\!\scriptscriptstyle J}$ is a (global or local) minimum of $V\vphantom{\underline{A}}_{\!\!eff}(r,J)$, and there are no other solutions in the interval $(r_{\!\scriptscriptstyle min},r_{\!\scriptscriptstyle max})$. Thus, a bound orbit of the general type oscillates near a stable circular orbit (an oscillation is the motion from pericentre to apocentre and back) and $\varphi_{\!{}_{\scriptscriptstyle osc}}$ is the angle between any two successive pericentre points of the orbit. The relativistic precession of pericentres of bound orbits is considered in~\cite{Meyer2012, Goddi2017,  Grould2017b,  Vieira2014, Collodel2018, Zakharov2012, Dokuchaev2015, Zakharov2018} both from a purely theoretical point of view and in the context of observations of S-stars in the Galactic Centre.

If a massive test particle moves radially in some non-static degenerate orbit, which will be the case if $J=0$ and $A(r_0)<E^2<1$, then $\Delta\varphi=-2\pi$. For any small $J\gtrapprox{}0$, this orbit becomes nondegenerate and can have an arbitrarily large number of oscillations per revolution. With further increasing $J$ (and with fixed $E$), the number and the amplitude of oscillations monotonically decrease, while the angle of precession increases and reaches its maximum value. We always have $\Delta\varphi<0$ for orbits located, even if only in part, in the central region.

If an orbit is located in the asymptotic region where the influence of a scalar field on the spacetime geometry vanishes, then the angle of precession is always positive, as well as in the Schwarzschild spacetime. There is also the intermediate spacetime region where the scalar field is relatively weak and the precession is absent for some special values of $J$ and $E$. More exactly, for a given SFNS spacetime, the domain (in the $(J,E)$ plane) of the existence of noncircular bound orbits will be separated into two open parts, with $\Delta\varphi<0$ and $\Delta\varphi>0$, by the curve defined by the equation $\Delta\varphi(E,J)=0$.

At the end of this section, it is important to note that we have a powerful degree of freedom in the choice of the self-interaction potential (in the wide class of physically admissible potentials) of a scalar field or, equivalently, in the choice of the distribution of the scalar field itself. To see a degree of universality of the construction under consideration, one can take into account the identity $2(\xi-3m)\mathrm{e}^{\Phi}=\left[(\xi-3m)^2\right]'$ and rewrite the expression~(\ref{A'}) as an equation for $\xi$ (and thus for $\phi$). The resulting equation,
\begin{equation}\label{}
\left[(\xi-3m)^2\right]'=2rA-r^2A'
\nonumber
\end{equation}
with the asymptotic condition $(\xi-3m)^2\,\rightarrow{}\,r^2-\,6mr$\, as $r\rightarrow\infty$,\, has a unique solution in the region where $2A>rA'$. In this region, using a suitable scalar field distribution, we can simulate some spherically symmetric geometry with a given metric function $A(r)$, and therefore the corresponding effective potential of a test particle for arbitrary specific angular momentum and energy. The condition $2A>rA'$ holds for naked singularities with any physically reasonable gravitating matter and it holds for black holes outside the photon sphere. The effective potential alone does not uniquely determine all the parameters of a bound orbit but only the pericentre and apocentre radii; in addition, we can approximately reconstruct the shape of the orbit by simultaneous varying the angular momentum and energy. In this paper we are mainly interested in just the shape of orbits because the angle of precession cumulates from revolution to revolution. However, there are other important observational parameters, namely, the period of an oscillation, the orbital (tangential) velocity $v_o$ in pericentre, and the radial velocity $v_r$ of a test particle, which can be expressed from~(\ref{int-motion}) in the form
\begin{equation}\label{}
T_{\scriptscriptstyle osc}=2E\!\int \limits_{\,r_{\!\scriptscriptstyle min}}^{\:r_{\!\scriptscriptstyle max}}\! \frac{\,\mathrm{e}^{\Phi}} {\,A\sqrt{E^2-V_{\!\scriptscriptstyle eff}\,}}\,dr\,,
\nonumber
\end{equation}
\begin{equation}\label{}
v_o=\frac{J}{E}\frac{\,A(r_{min})}{\,r}, \qquad
v_r=\frac{dr}{dt}= \frac{1}{E}\! \sqrt{fA\left(E^2- {V\vphantom{\underline{A}}}_{\!\!eff}\right)\,}\,.
\nonumber
\end{equation}
Given an orbit around a SFNS and the orbit with the same $J$, $r_{min}$, and $r_{max}$ around the Schwarzschild black hole of the same mass as the SFNS, it can be shown from~(\ref{A-f}) and~(\ref{cond1}) that $T_{\scriptscriptstyle osc}^{\scriptscriptstyle SFNS}< T_{\scriptscriptstyle osc}^{\scriptscriptstyle Sch}$ and $v_{o}^{\scriptscriptstyle SFNS}> v_{o}^{\scriptscriptstyle Sch}$.

The orbital velocity in pericentre is also determined only by the metric function $A$, while the period of an oscillation and the radial velocity can be determined only if another metric function in~(\ref{A-f}), $f(r)$, is known; however, the latter cannot in general be exactly reconstructed in this way. One has $\mathrm{e}^{\Phi}\leq1$ for all $r>0$, in accordance with the expression~(\ref{F-xi}); the latter is equivalent to the condition $f(r)\geq{}A(r)$, which holds for any physically reasonable configuration except for wormholes. All that has been said allows us to conclude that scalar fields, even if they do not exist in nature, are useful in modelling spherically symmetric, self-gravitating, compact hairy objects, and thus are useful in interpreting the astronomical observations of bound orbits.

\section{ An analytic example}
\label{Sec4}

In a purely analytical treatment of the quadratures~(\ref{A-f})\,--\,(\ref{V}), it is more convenient to start with some specially chosen function~$\xi(r)$ \big(or $\mathrm{e}^{\Phi(r)}$\big) instead of the field function $\phi(r)$, because each of the functions $\mathrm{e}^{\Phi(r)}$, $\xi(r)$, and $\phi(r)$ uniquely determines another two functions.

For the sake of brevity, we will explore a simple, fully analytic, one-parameter family of solutions with
\begin{equation}\label{xi}
\xi= \sqrt{r^2+2ar+5a^2}-a,\quad \mathrm{e}^{\Phi}= \frac{r+a}{\sqrt{r^2+2ar+5a^2}}.
\end{equation}
By direct integration in (\ref{A-f}), we obtain
\begin{multline}\label{Aex}
A=1+\frac{2a}{3r}\,-\,2\,\frac{a+3m}{15a} \vphantom{\int\limits_{r}^{\infty}}\\
\times\left\{\!\!\frac{\sqrt{r^2+2ar+5a^2}}{r}\! \left(\!\!1+\frac{r}{a}-\frac{r^2}{a^2}\!\right)+ \frac{r^2}{a^2}\!\right\},
\end{multline}
where $a$ is the parameter of 'intensity' of the scalar field.

In studying bound orbits, we are interested only in the metric functions and can, therefore, use an arbitrary unit of length. On the other hand, the solution~(\ref{A-f})\,--\,(\ref{V}) is invariant under the scale transformations
\begin{equation}
r\rightarrow r/\lambda,\quad m\rightarrow m/\lambda,\quad V\rightarrow \lambda^2V,\quad\;\;\lambda>0.\nonumber
\end{equation}
It means that by applying $\lambda=m$ in this transformation, we can take, as it is usually done in general relativity, the mass of a SFNS as the current unit of length. Thus, without loss of generality, we  suppose everywhere below that $m=1$.

Using~(\ref{R1})\,--\,(\ref{R3}) we find that the Kretchmann invariant diverges at the centre, namely,
\begin{equation}\label{}
K= \frac{\,5 \big(3-\sqrt{5}\,\big) \big(4a-3\sqrt{5}-3\big)^{\!2}} {6\;\!\!r^6\vphantom{\tilde{A}}}\; + \; O\big(r^{-5}\big)\,,\quad r\rightarrow0.
\nonumber
\end{equation}
In accordance with~(\ref{3m<xi0}), the condition $a>3/(\sqrt{5}-1)$ determines the subfamily of SFNSs.

The scalar field can be obtained by solving the problem $\phi'=\sqrt{\Phi'/r\,},\,\;\phi(\infty)=0$. The result is
\begin{equation}\label{phi-F}
\phi=\sqrt[4]{4/5}\, \big(F(\lambda_0,k)-F(\lambda(r),k)\big),
\end{equation}
where
\begin{equation}
k=\sqrt{1/2+1/\sqrt{5}}, \quad
\lambda_0= \arcsin\! \left(\frac{2\sqrt[4]{20}}{\,\sqrt{5}+2\,}\right),
\nonumber
\end{equation}
\begin{equation}
\lambda(r)=\arcsin\! \left(\frac{2\sqrt[4]{20}\sqrt{r(r+a)}} {\big(\sqrt{5}+2\big)r+\sqrt{5}\,a\,}\right),
\nonumber
\end{equation}
and $F(\lambda,k)$ denotes the incomplete elliptic integral of the first kind in the Legendre normal form,
\begin{equation}
F(\lambda,k)=\int_{0}^{\,\sin\lambda} \frac{dt}{\,\sqrt{1-t^2}\sqrt{1-k^2t^2}\,}.
\nonumber
\end{equation}
One has $\phi=\sqrt[4]{4/5}\,\,F(\lambda_0,k)\,-\, \sqrt{16r/5a}\,+\,O(r^{3/2})$ near $r=0$ ($\phi(0)\approx2.498$)\, and\, $\phi=2a/r+O(r^{-2})$\, at infinity. As a function of $r$, the potential can be obtained directly from equation~(\ref{V}) but has a cumbersome form; (\ref{V}) and~(\ref{phi-F}) determine, parametrically, $V(\phi)$. This potential, which has arisen in a simple demo example, is not physically interesting because it is negative everywhere in the interval $(0,\infty)$. However, there exist solutions whose self-interaction potential is positive in the region $r>r_p$ with $r_p$ sufficiently close to the centre~\cite{Nikonov2008}.

\begin{figure*}[p]
\begin{center}
\includegraphics [width=0.313\textwidth]{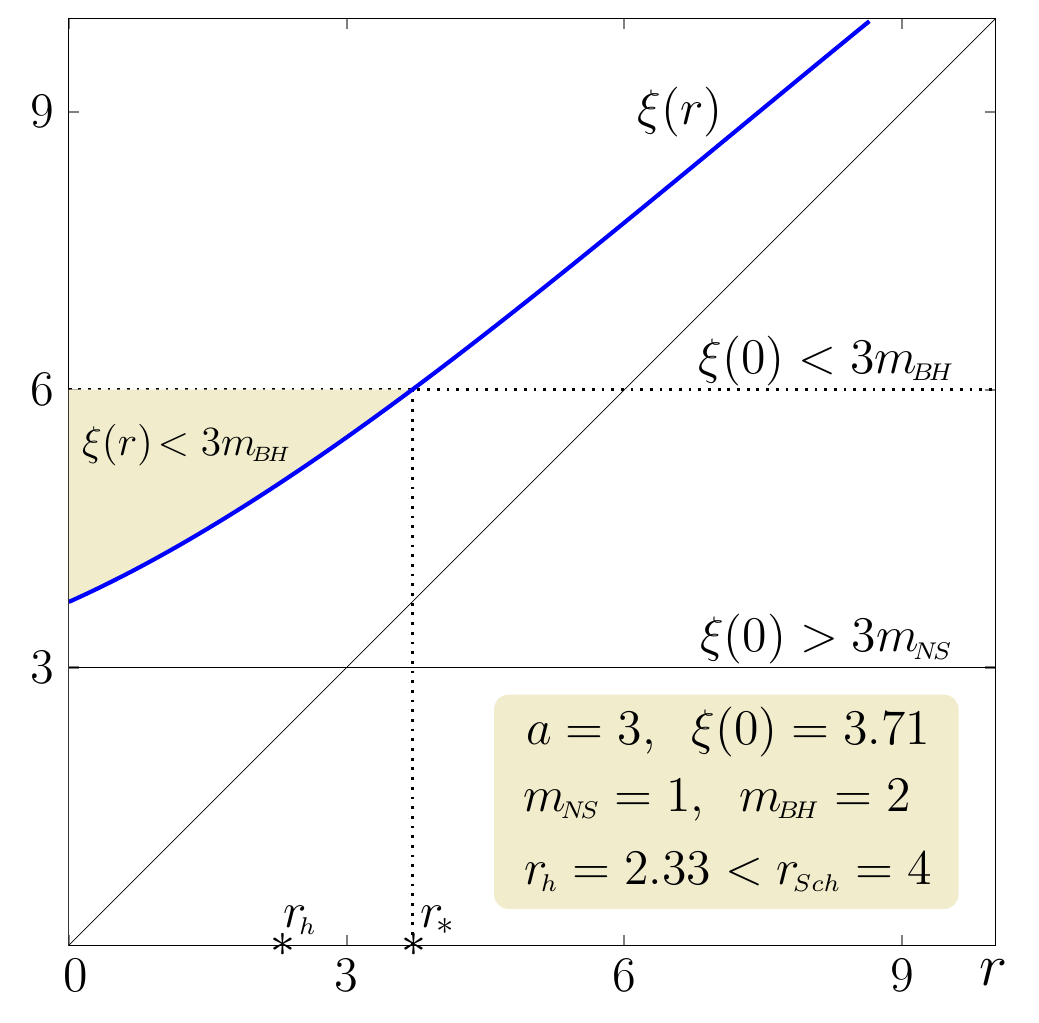}\;\;\;
\includegraphics [width=0.32\textwidth]{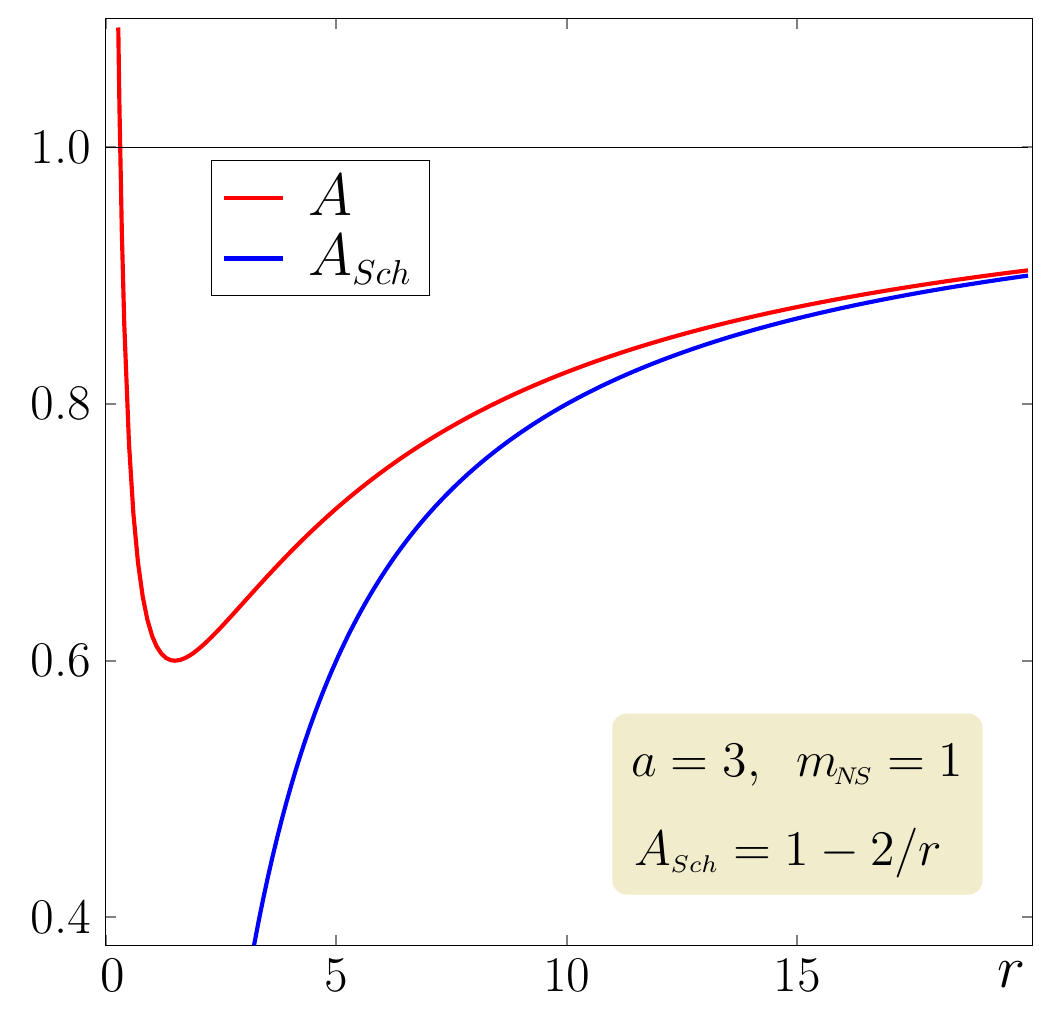}\;\;\;
\includegraphics
[width=0.319\textwidth]{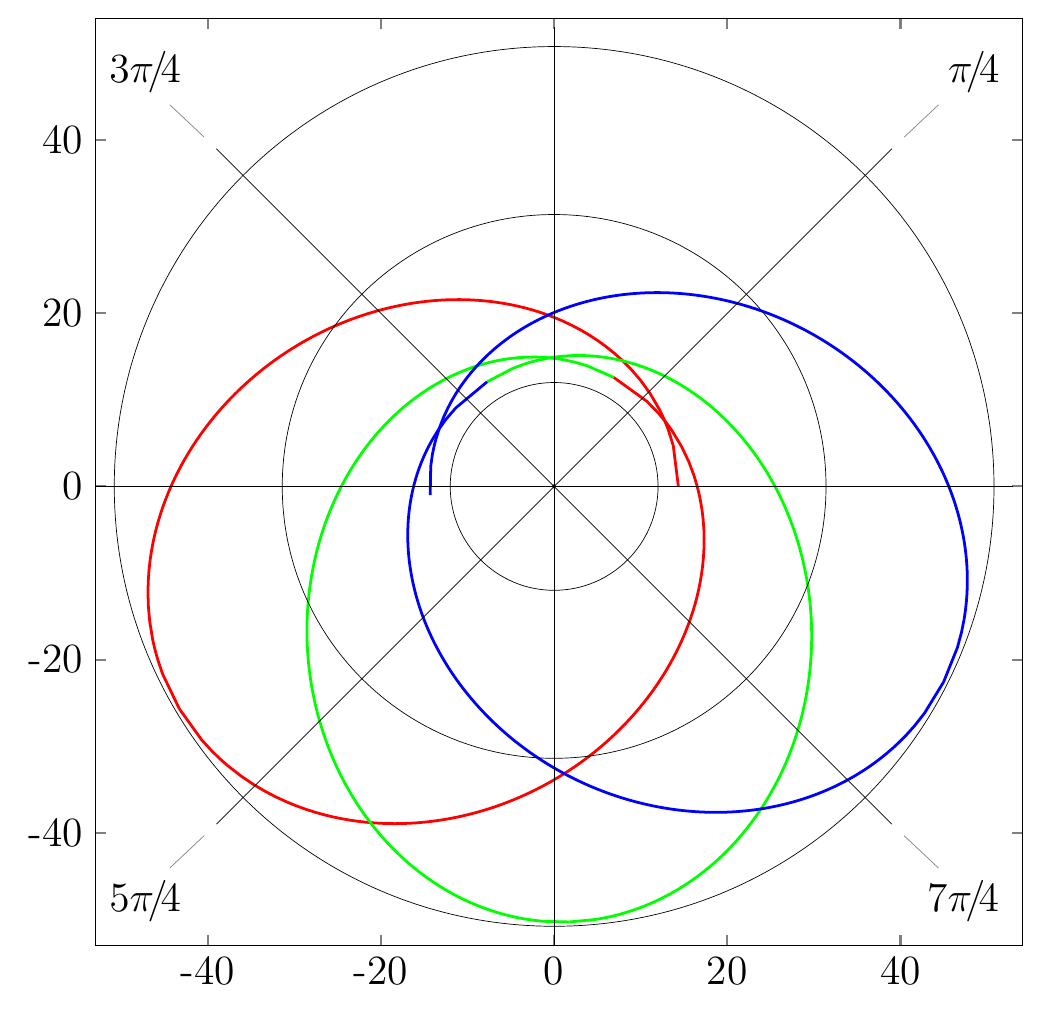}
\end{center}
\caption{The left panel shows the function $\xi(r)$ given by (\ref{xi}); SFNSs and SFBHs have masses in the intervals $(0,\xi(0))$ and $(\xi(0),\infty)$, respectively. The middle panel shows the metric functions $A(r)$ for the SFNS (\ref{Aex}) with $a=3$ and for the Schwarzschild solution of the same mass $m=1$. For this Schwarzschild spacetime, the shape of the orbit with the parameters $J=5.12$, $E=0.97$, $r_{min}=14.35$, $r_{max}=50.33$, and $\Delta\varphi=+1.05\approx\pi/3$ is plotted in the right panel.}
\label{fig1}
\end{figure*}
\begin{figure*}[p]
\begin{center}
\includegraphics [width=0.318\textwidth]{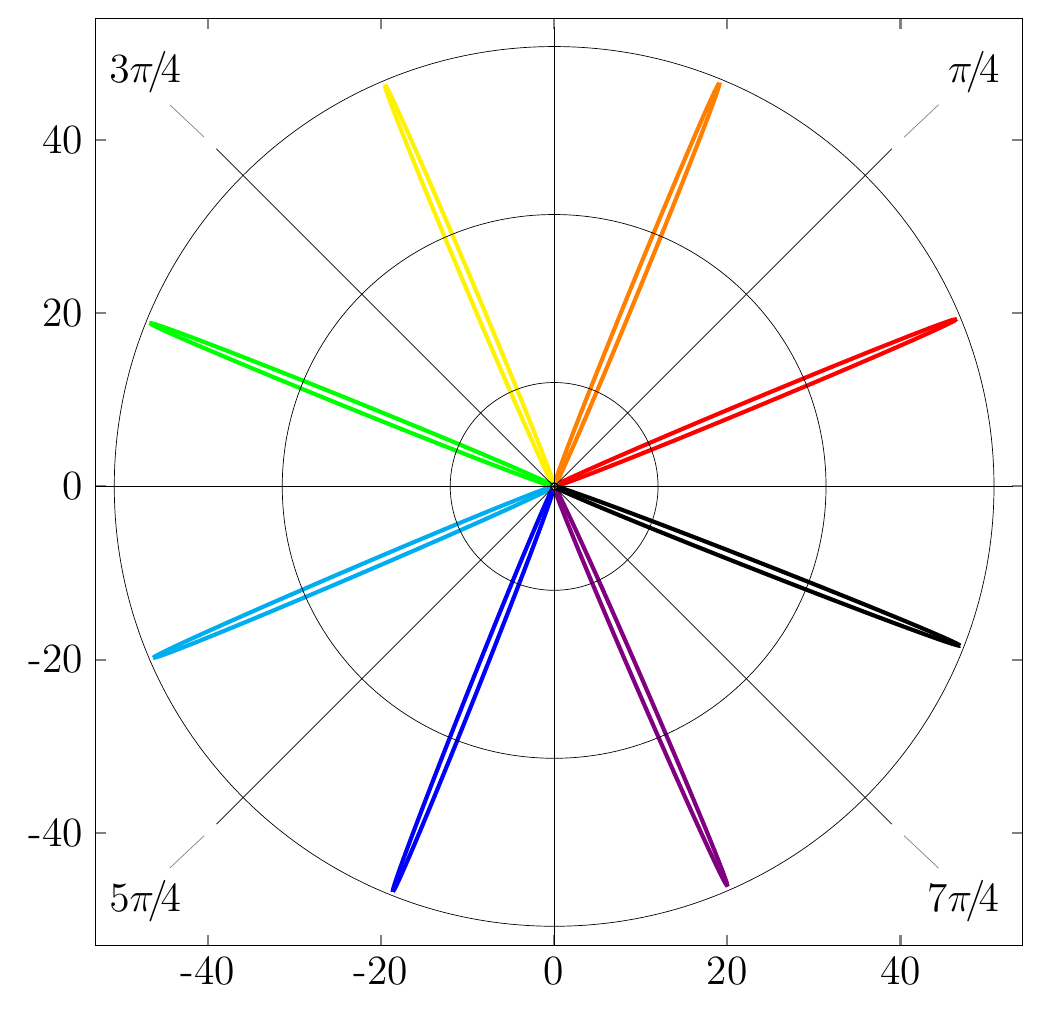}\;\;\;
\includegraphics [width=0.318\textwidth]{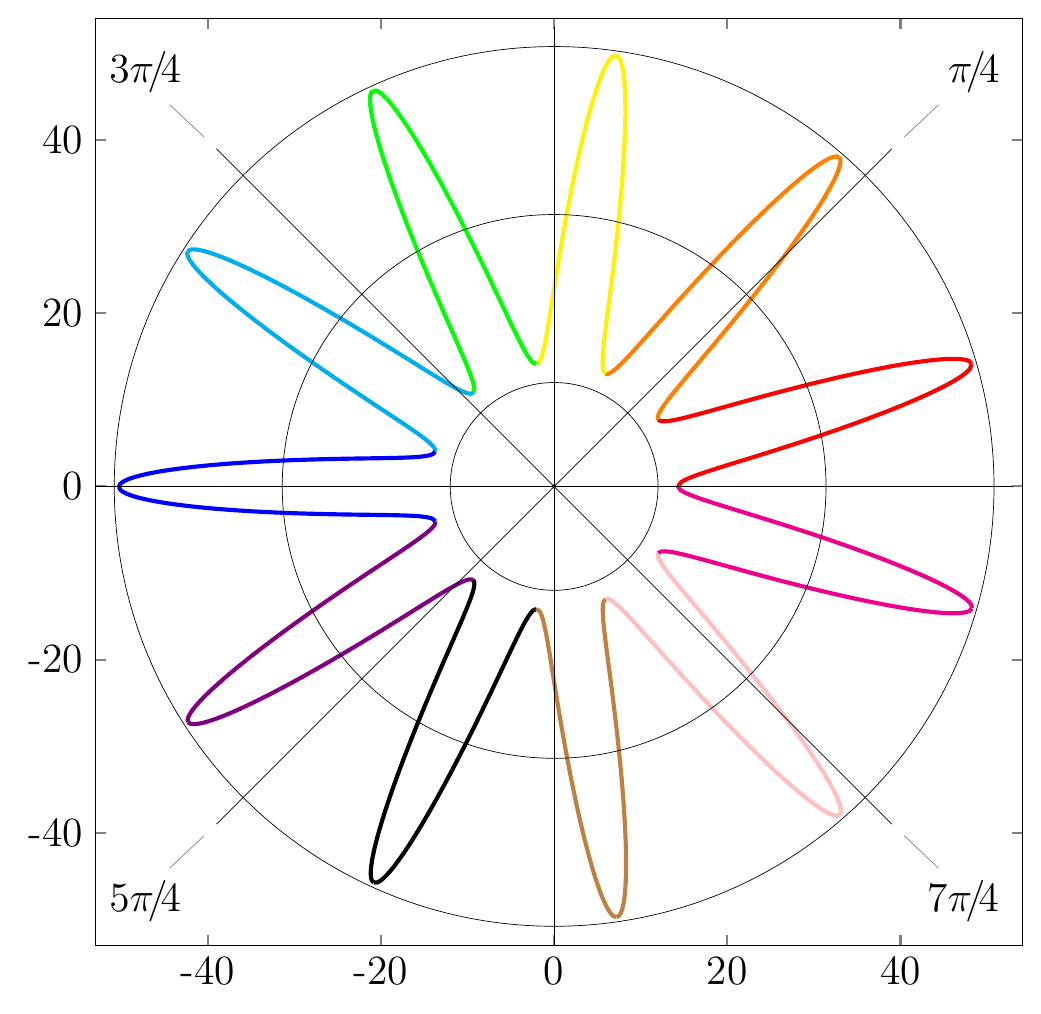}\;\;\;
\includegraphics [width=0.318\textwidth]{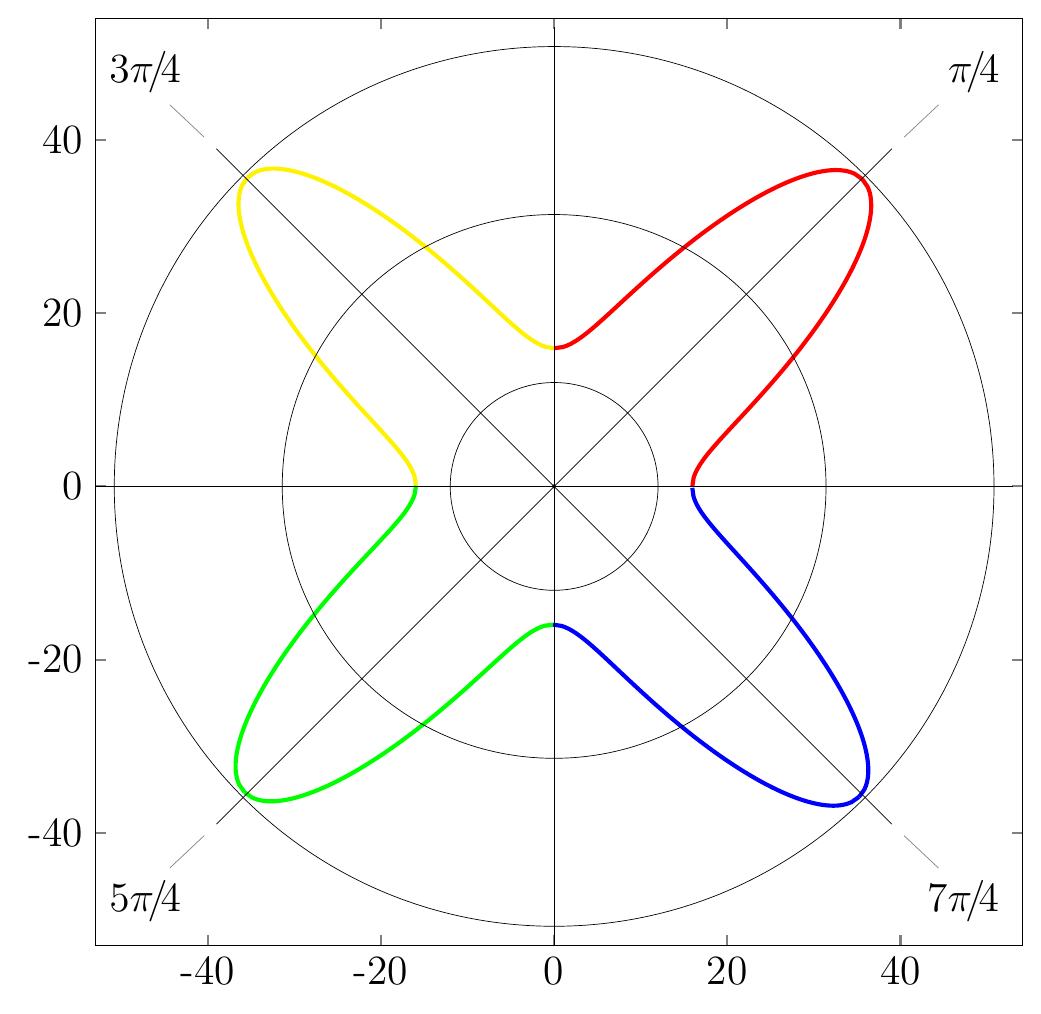}
\end{center}
\caption{The shape of orbits with different parameters and different numbers of oscillations per revolution. They all have the apocentre radius which is approximately the same as that for the Schwarzschild spacetime in Fig.~\ref{fig1}, $r_{max}\approx50.3$, but they all have a negative angle of precession.
Left panel:  $a=3$, $J=0.094$, $E=0.9606$, $r_{min}=0.37$, $r_{max}=50.37$, $\Delta\varphi=-5.495\approx-7\pi/4$.
Middle panel: $a=10$, $J=0.5$, $E=0.966$, $r_{min}=14.31$, $r_{max}=50.26$, $\Delta\varphi=-5.71\approx-20\pi/11$.
Right panel: $a=10$, $J=1.445$, $E=0.9667$, $r_{min}=15.95$, $r_{max}=50.27$, $\Delta\varphi=-4.71\approx-3\pi/2$.} \label{fig2}
\end{figure*}
\begin{figure*}[p]
\begin{center}
\includegraphics [width=0.308\textwidth]{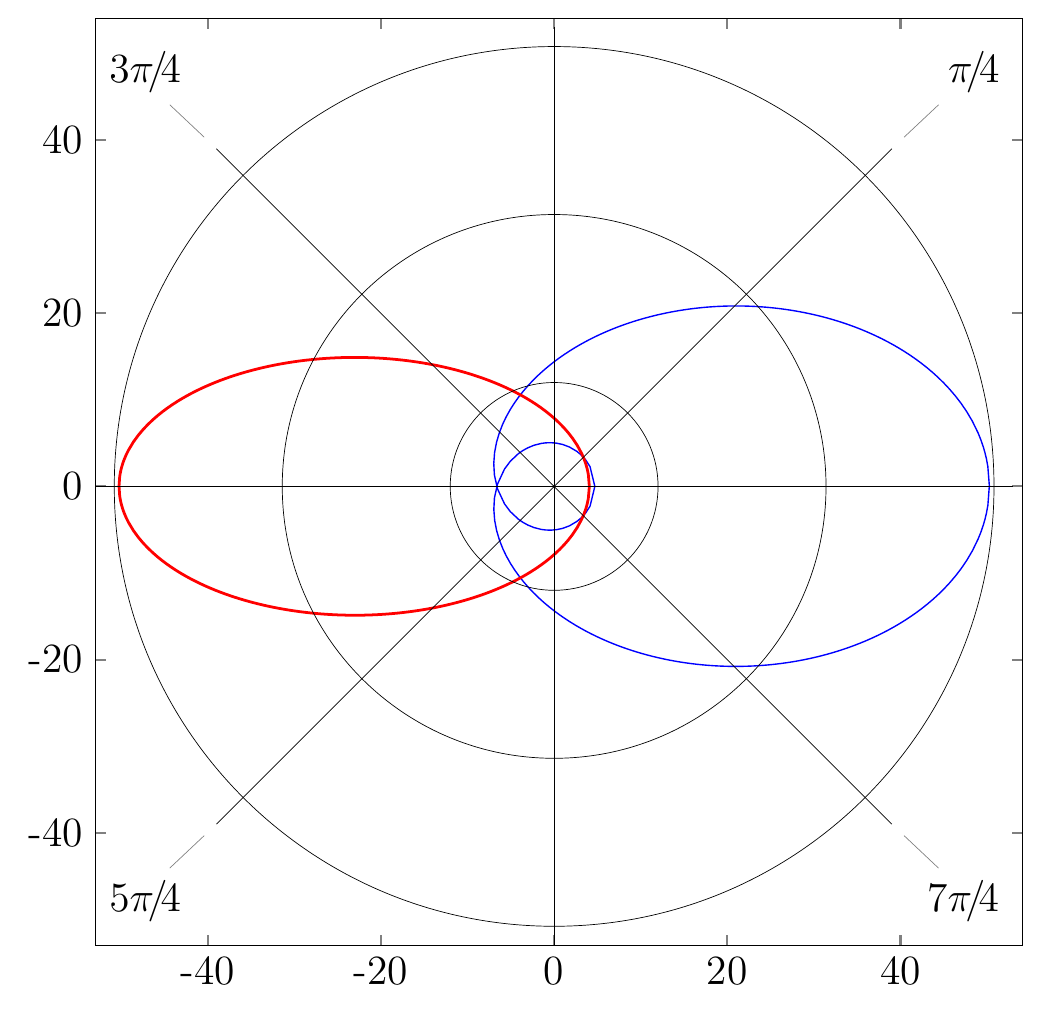}\;\;\;
\includegraphics [width=0.326\textwidth]{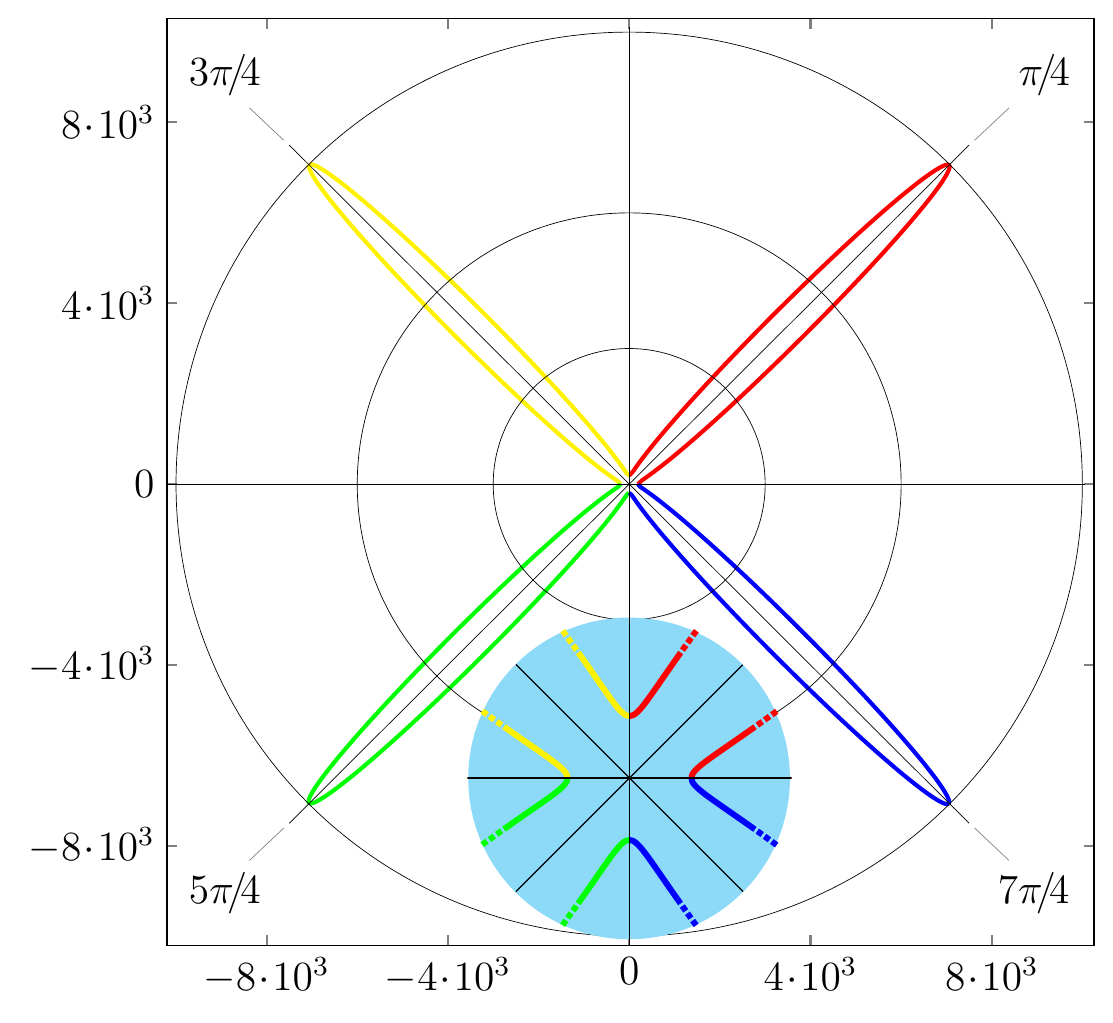}\;\;\;
\includegraphics [width=0.326\textwidth]{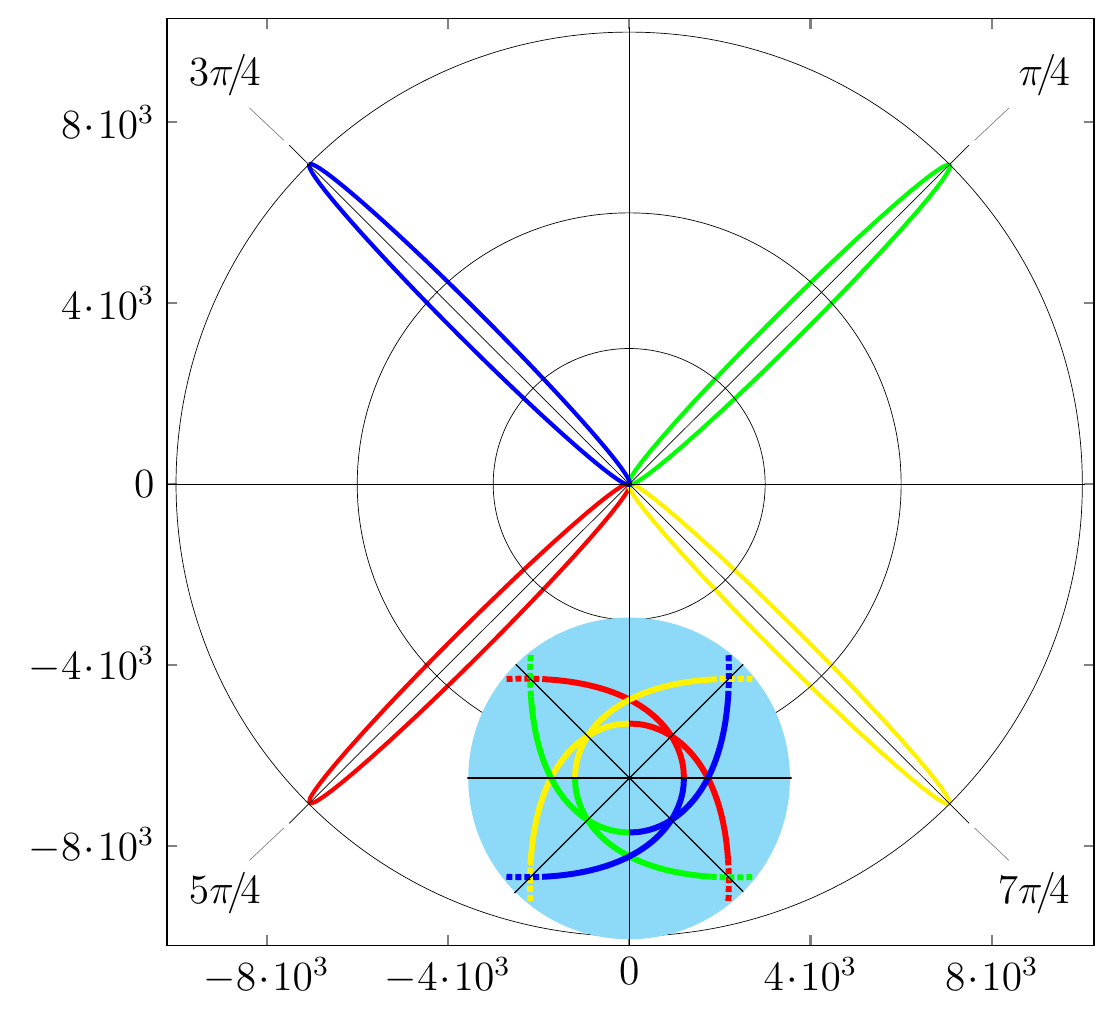}
\end{center}
\caption{Left panel: the shape of an orbit around a SFNS (left of center) and the shape of an orbit in the Schwarzschild spacetime with the same mass $m_{\scriptscriptstyle Sch}=m_{\scriptscriptstyle NS}=1$ (right of center); these orbits have the same pericentre and apocentre radii. The parameters of the orbits, respectively, are: $a=2.5$, $J=2.856$, $E=0.9635$, $r_{min}=4.07$, $r_{max}=50.25$, $\Delta\varphi=0$ (SFNS), and $J=3.88$, $E=0.966$, $r_{min}=4.70$, $r_{max}=50.25$, $\Delta\varphi=+2\pi$ (the Schwarzschild spacetime).
Middle panel: an orbit around a SFNS with parameters  $a=50$, $J=5.66$, $E=0.9998$, $r_{min}=191.1$, $r_{max}=9978.7$, $\Delta\varphi=-4.71=-3\pi/2$. Right panel: an orbit around the Schwarzschild black hole ($m_{\scriptscriptstyle Sch}=1$) with parameters $J=4.72$, $E=0.9998$, $r_{min}=8.54$, $r_{max}=9988.4$, $\Delta\varphi=+\pi/2$.} \label{fig3}
\end{figure*}

Our main goal (of considerable observational interest) is to compare orbits which have the same pericentre and apocentre radii and a comparable number of oscillations, but which either are in different spacetimes or have different parameters. Fig.~\ref{fig1} shows the typical behaviour of the function $\xi(r)$ belonging to the family~(\ref{xi}), the metric function $A(r)$ of a SFNS together (and in comparison) with the corresponding metric function of the Schwarzschild solution of the same mass, and a typical bound orbit in the latter spacetime. It is important to note that all noncircular bound orbits in the  Schwarzschild spacetime have 'the relativistic pericentre advance', that is $\Delta\varphi>0$. In contrast, in Fig.~\ref{fig2} we plot the shape of orbits possessing the same apocentre radius as the orbit in Fig.~\ref{fig1}; one of them (in the middle panel) has, in addition, the same pericentre radius. In all three cases, the angles of precession are negative and sufficiently large in magnitude. The shape of an orbit depends on the specific angular momentum $J$ and the specific energy $E$ of a test particle. Numerical simulations show that the number of oscillations per revolution decreases with increasing $J$ when the value of $E$ is fixed, as well as with decreasing $E$ when the value of $J$ is fixed; both these dependences appear to be true in general, not only for the family~(\ref{xi}).

Fig.~\ref{fig3} represents comparable orbits around SFNSs and around the corresponding Schwarzschild black hole with the same mass $m_{\scriptscriptstyle Sch}=m_{\scriptscriptstyle NS}=1$. For the SFNS with $a=2.5$, an orbit which looks like a Keplerian orbit, whose pericentre and apocentre are not shifting during one revolution of a test particle and thus $\Delta\varphi=0$, is represented in the left panel. In the Schwarzschild spacetime, the only similar orbit, having the same motionless pericentre and apocentre as the orbit around the SFNS, has the angle of precession $\Delta\varphi=2\pi$. The middle and right panels of Fig.~\ref{fig3} represent two very highly elongated orbits around the SFNS spacetime with $a=50$ and the Schwarzschild black hole, respectively. These orbits have approximately the same apocentre radius, which is larger than the size of the central part of the orbits by a factor of about 50. From the point of view of a distant observer, the orbits can look very similar or even being observationally indistinguishable from one another, whereas their angles of precession are, respectively, $-3\pi/2$ and $+\pi/2$. The insets in the middle and right panels show the obvious difference in the behaviours of the orbits in the central regions; this example explains why we need the observations of both pericentre and apocentre of elongated orbits.

\section{Conclusions}

In this paper, we considered a model of a spherically symmetric strongly gravitating massive object surrounded by a self-gravitating nonlinear scalar field, having in mind the centre of a galaxy surrounded by dark matter. This idealized model is treated in a fully analytical manner and, in this way, we have found some new features of the orbital motion of free test particles around SFNSs and SFBHs. First, for a given \textit{positive} mass, there exists a continuum of SFNSs with asymptotic geometry of the Schwarzschild spacetime with the same mass; thus, in contrast to Schwarzschild naked singularities, SFNSs exhibit the attractive nature of gravitation.

Second, the radii of the event horizon and of the innermost stable circular orbit of a SFBH are normally much less than those of the Schwarzschild black hole with the same mass. A SFNS does not have an innermost stable circular orbit but has a stable, degenerated, static orbit on which a test particle, having zero angular momentum and the minimum of its energy, remains at rest as time passes. It is important that this phenomena cannot take place on the outside of the event horizon of a vacuum black hole or a SFBH. Moreover, in the theory of self-gravitating scalar fields with the positive kinetic term in the Lagrangian, among all configurations possessing a positive mass, naked singularities and only naked singularities have such an orbit. If a noninteracting test particle is initially has a sufficiently small specific angular momentum, it will permanently remain close to the static degenerate orbit. Consequently, in a SFNS spacetime, 'slow' particles of matter, which rest on the static degenerate orbit or slowly move near it, will form a gravitationally bound cluster. One can expect that most of the energy of particles, falling into this shell with initially high relative velocities, will be radiated to infinity. If the accretion flow onto the inner region is negligible or at least sufficiently small, as in the central region of Sgr A*, then only collisions and radiative processes determine the time evolution of the cluster. Eventually the cluster will cool down and then become a spherical shell consisting of cold gas, dust, or fluid. Therefore, this shell will be seen by a distant observer as a shadow or a gray spot, which can be mistakenly taken for the shadow of a black hole.

And third, we have studied the shape of orbits close to the centres of SFNSs and shown that their angles of precession are negative, that is, pericentre retreats during each orbital revolution, and not advances as in the case of Schwarzschild black holes. At the present time, we can observe only the orbits of S-stars in the distant region of the Galactic center. However, one can hope that the future development of precise astronomical instruments (e.g., at the facilities of the Event Horizon Telescope) will accurately measure the orbital precession of the pericentres of the known S-stars and of other, at present unobserved, more short-period stars. Thus, we will be able to recognize observationally the nature of the object Sgr A*.


\begin{thebibliography}{9}


\bibitem{Meyer2012}
L. Meyer et al, The shortest known period star orbiting our galaxy's supermassive black hole. Science \textbf{338}, 6103, 84--87 (2012) \href{https://arxiv.org/abs/1210.1294} {arXiv:1210.1294}

\bibitem{Akiyama2015}
K. Akiyama et al, 230 GHz VLBI observations of M87: event-horizon-scale structure at the enhanced very-high-energy $\gamma$-ray state in 2012. Astrophys. J. \textbf{807}, 150 (2015) \href{https://arxiv.org/abs/1505.03545} {arXiv:1505.03545}

\bibitem{Fish2016}
V.L. Fish et al, Persistent Asymmetric Structure of Sagittarius A* on Event Horizon Scales. Astrophys. J. \textbf{820}, 90 (2016) \href{https://arxiv.org/abs/1602.05527} {arXiv:1602.05527}

\bibitem{Gillessen2017}
S. Gillessen et al, An update on monitoring stellar orbits in the galactic center. Astrophys. J. \textbf{837}, 30 (2017) \href{https://arxiv.org/abs/1611.09144} {arXiv:1505.03545}

\bibitem{Goddi2017}
C. Goddi et al, BlackHoleCam: fundamental physics of the Galactic center. Int. J. Mod. Phys. D \textbf{26}, 1730001 (2017) \href{https://arxiv.org/abs/1606.08879} {arXiv:1606.08879}

\bibitem{Hees2017}
A. Hees et al, Testing General Relativity with stellar orbits around the supermassive black hole in our Galactic center. Phys. Rev. Lett. \textbf{118}, 211101 (2017) \href{https://arxiv.org/abs/1705.07902} {arXiv:1705.07902}



\bibitem{Joshi2014}
P.S. Joshi, D. Malafarina, R. Narayan, Distinguishing black holes from naked singularities through their accretion disc properties. Class. Quant. Grav. \textbf{31}, 015002 (2014) \href{https://arxiv.org/abs/1304.7331} {arXiv:1304.7331}

\bibitem{Macedo2013}
C.F.B. Macedo, P. Pani, V. Cardoso, L.C.B. Crispino, Astrophysical signatures of boson stars: Quasinormal modes and inspiral resonances. Phys. Rev. D \textbf{88}, 064046 (2013) \href{https://arxiv.org/abs/1307.4812} {arXiv:1307.4812}

\bibitem{Li2014}
Z. Li, C. Bambi, Distinguishing black holes and wormholes with orbiting hot spots. Phys. Rev. D \textbf{90}, 024071 (2014)     \href{https://arxiv.org/abs/1405.1883} {arXiv:1405.1883}

\bibitem{Grould2017a}
M. Grould, F.H. Vincent, T. Paumard, G. Perrin, General relativistic effects on the orbit of the S2 star with GRAVITY. Astron. Astrophys. \textbf{608}, A22 (2017) \href{https://arxiv.org/abs/1709.04492} {arXiv:1709.04492}

\bibitem{Grould2017b}
M. Grould, Z. Meliani, F.H. Vincent, P. Grandcl\'{e}ment, E. Gourgoulhon, Comparing timelike geodesics around a Kerr black hole and a boson star. Class. Quant. Grav. \textbf{34}(21), 215007 (2017) \href{https://arxiv.org/abs/1709.05938} {arXiv:1709.05938}

\bibitem{Abuter2018}
R. Abuter et al, Detection of the gravitational redshift in the orbit of the star S2 near the Galactic centre massive black hole. Astron. Astrophys. \textbf{615}, L15 (2018) \href{https://arxiv.org/abs/1807.09409} {arXiv:1807.09409}

\bibitem{Akiyama2018}
K. Akiyama et al (The EHT collaboration), First M87 Event Horizon Telescope Results. I. The Shadow of the Supermassive Black Hole. Astrophys. J. Lett. \textbf{875}, L1 (2019) \href{https:// iopscience.iop.org/article/10.3847/2041-8213/ ab0ec7/meta}
{doi:10.3847/2041-8213/ab0ec7}

\bibitem{Shaikh2018}
R. Shaikh, P. Kocherlakota, R. Narayan, P.S. Joshi, Shadows of spherically symmetric black holes and naked singularities. Mon. Not. R. Astron. Soc. \textbf{482}, 52--64 (2018)
\href{https://arxiv.org/abs/1802.08060}
{arXiv:1802.08060}

\bibitem{Lee1996}
J. Lee and I. Koh, Galactic halos as boson stars. Phys. Rev. D \textbf{53}, 2236--2239 (1996)
\href{https://arxiv.org/abs/hep-ph/9507385}
{arXiv:hep-ph/9507385}

\bibitem{Matos2004}
T. Matos, L.A. Urena-Lopez, On the nature of dark matter. Int. J. Mod. Phys. D \textbf{13}, 2287--2292 (2004)
\href{https://arxiv.org/abs/astro-ph/0406194}
{arXiv:astro-ph/0406194}

\bibitem{Robles2012}
V.H. Robles, T. Matos, Flat central density profile and constant dark matter surface density in galaxies from scalar field dark matter. Mon. Not. R. Astron. Soc. \textbf{422}, 282--289 (2012)
\href{https://arxiv.org/abs/1201.3032} {arXiv:1201.3032}

\bibitem{Benisty2017}
D. Benisty, E. I. Guendelman, Interacting diffusive unified dark energy and dark matter from scalar fields. Eur. Phys. J. C \textbf{77}, 396 (2017) \href{https://arxiv.org/abs/1701.08667} {arXiv:1701.08667}

\bibitem{Bernal2019}
N. Bernal, C. Cosme, T. Tenkanen, Phenomenology of self-interacting dark matter in a matter-dominated universe. Eur. Phys. J. C \textbf{79}, 99 (2019)
\href{https://arxiv.org/abs/1803.08064} {arXiv:1803.08064}


\bibitem{Babar2017}
G.Z. Babar, A.Z. Babar, Y.K. Lim, Periodic orbits around a spherically symmetric naked singularity. Phys. Rev. D \textbf{96}, 084052 (2017)
\href{https://arxiv.org/abs/1710.09581} {arXiv:1710.09581}

\bibitem{DeLaurentis2018}
M. De Laurentis, Z. Younsi, O. Porth, Y. Mizuno, L. Rezzolla, Test-particle dynamics in general spherically symmetric black hole spacetimes. Phys. Rev. D \textbf{97}, 104024 (2018)     \href{https://arxiv.org/abs/1712.00265} {arXiv:1712.00265}

\bibitem{Zhou2015}
S. Zhou, R. Zhang, J. Chen, Y. Wang, Geodesic structure of Janis-Newman-Winicour space-time. Int. J. Theor. Phys. \textbf{54}, 2905 (2015)
\href{https://arxiv.org/abs/1408.6041} {arXiv:1408.6041}

\bibitem{Kratovitch2017}
P.V. Kratovitch, I.M. Potashov, Ju.V. Tchemarina, A.N. Tsirulev, Topological geons with self-gravitating phantom scalar field. Journal of Physics: Conference Series \textbf{934}, 012047 (2017)    \href{https://arxiv.org/abs/1805.04447} {arXiv:1805.04447}

\bibitem{Mishra2018}
A. Mishra, S. Chakraborty, On the trajectories of null and timelike geodesics in different wormhole geometries. Eur. Phys. J. C \textbf{78}, 374 (2018) \href{http://arxiv.org/abs/1710.06791} {arXiv:1710.06791}

\bibitem{Willenborg2018}
F. Willenborg, S. Grunau, B. Kleihaus, J. Kunz, Geodesic motion around traversable wormholes supported by a massless conformally-coupled scalar field. Phys. Rev. D \textbf{97}, 124002 (2018) \href{http://arxiv.org/abs/1801.09769} {arXiv:1801.09769}

\bibitem{Stashko2018}
O.S. Stashko, V.I. Zhdanov, Spherically symmetric configurations of General Relativity in presence of scalar field: separation of test body circular orbits. Gen. Relat. Gravit. \textbf{50}, 105--114 (2018) \href{http://arxiv.org/abs/1702.02800} {arXiv:1702.02800}


\bibitem{Lechtenfeld1995}
O. Bechmann, O. Lechtenfeld, Exact black hole solution with selfinteracting scalar field. Class. Quantum Grav. \textbf{12}, 1473--1482 (1995) \href{http://arxiv.org/abs/gr-qc/9502011} {arXiv:gr-qc/9502011}

\bibitem{Lechtenfeld1998}
H. Dennhardt, O. Lechtenfeld, Scalar deformations of Schwarzschild holes and their stability Int. J. Mod. Phys. \textbf{A13}, 741--764 (1998) \href{http://arxiv.org/abs/gr-qc/9612062} {arXiv:gr-qc/9612062}

\bibitem{BronnikovShikin2002}
K.A. Bronnikov, G.N. Shikin, Spherically symmetric scalar vacuum: no-go theorems, black holes and solitons. Grav. Cosmol. \textbf{8}, 107--116 (2002) \href{http://arxiv.org/abs/gr-qc/0109027} {arXiv:gr-qc/0109027}

\bibitem{Tchemarina2009}
Ju.V. Tchemarina, A.N. Tsirulev, Spherically symmetric gravitating scalar fields. The inverse problem and exact solutions. Gravitation and Cosmology \textbf{15}, 94--95 (2009)

\bibitem{Azreg2010}
M. Azreg-A\"inou, Selection criteria for two-parameter solutions to scalar-tensor gravity. Gen. Rel. Grav. \textbf{42}, 1427--1456 (2010)
\href{http://arxiv.org/abs/0912.1722} {arXiv:0912.1722}

\bibitem{Cadoni2011}
M. Cadoni, M. Serra, S. Mignemi, Exact solutions with AdS asymptotics of Einstein and Einstein-Maxwell gravity minimally coupled to a scalar field. Phys. Rev. D \textbf{84}, 084046 (2011)
\href{http://arxiv.org/abs/1107.5979} {arXiv:1107.5979}

\bibitem{Solovyev2012}
D.A. Solovyev, A.N. Tsirulev, General properties and exact models of static selfgravitating scalar field configurations. Class. Quantum Grav. \textbf{29}, 055013 (2012)

\bibitem{Cadoni2018}
M. Cadoni, E. Franzin, F. Masella, M. Tuveri, A solution-generating method in Einstein-scalar gravity. Acta. Appl. Math. (2018)\\
\href{https://doi.org/10.1007/s10440-018-00232-2}
{doi:10.1007/s10440-018-00232-2}

\bibitem{BronnikovChernakova2007}
K.A. Bronnikov, M.S. Chernakova, Charge black holes and unusual wormholes in scalar-tensor gravity. Grav. Cosmol. \textbf{13}, 51--55 (2007)
\href{http://arxiv.org/abs/gr-qc/0703107} {arXiv:gr-qc/0703107}

\bibitem{Nikonov2008}
V.V Nikonov, Ju.V. Tchemarina, A.N. Tsirulev, A two-parameter family of exact asymptotically flat solutions to the Einstein-scalar field equations. Class. Quantum Grav. \textbf{25}, 138001 (2008)

\bibitem{Franzin2018}
E. Franzin, M. Cadoni, M. Tuveri, Sine-Gordon solitonic scalar stars and black holes. Phys. Rev. D \textbf{97}, 124018 (2018)
\href{http://arxiv.org/abs/1805.08976} {arXiv:1805.08976}



\bibitem{Quevedo2011}
H. Quevedo, Mass Quadrupole as a source of naked singularities. Int. J. Mod. Phys. D \textbf{20}, 1779--1787 (2011)
\href{http://arxiv.org/abs/1012.4030} {arXiv:1012.4030}

\bibitem{Pugliese2012}
D. Pugliese, H. Quevedo, R. Ruffini, Circular motion of neutral test particles in Reissner-Nordstr\"{o}m spacetime. Phys. Rev. D \textbf{83}, 024021 (2010)
\href{http://arxiv.org/abs/1012.5411} {arXiv:1012.5411}

\bibitem{Vieira2014}
R.S.S. Vieira, J. Schee, W. Klu\'{z}niak, Z. Stuchl\'{\i}k, M. Abramowicz, Circular geodesics of naked singularities in the Kehagias-Sfetsos metric of Ho\u{r}ava's gravity. Phys. Rev. D \textbf{90}, 024035 (2014) \href{http://arxiv.org/abs/1311.5820} {arXiv:1311.5820}

\bibitem{Collodel2018}
L.G. Collodel, B. Kleihaus, J. Kunz, Static orbits in rotating spacetimes. Phys. Rev. Lett. \textbf{120}, 201103 (2018) \href{http://arxiv.org/abs/1711.05191} {arXiv:1711.05191}

\bibitem{Zakharov2012}
D. Borka, P. Jovanovi\'{c}, V. Borka Jovanovi\'{c}, A.F. Zakharov, Constraints on $\mathbb{R}^n$ gravity from precession of orbits of S2-like stars. Phys. Rev. D \textbf{85}, 124004 (2012)
\href{https://doi.org/10.1103/PhysRevD.85.124004}
{https://doi.org/10.1103/PhysRevD.85.124004}

\bibitem{Dokuchaev2015}
V.I. Dokuchaev, Yu.N. Eroshenko, Weighing of the dark matter at the center of the Galaxy.  JETP Letters \textbf{101}, 777--782 (2015) \href{https://doi.org/10.1134/S0021364015120048} {doi:10.1134/S0021364015120048}

\bibitem{Zakharov2018}
A.F. Zakharov, Constraints on tidal charge of the supermassive black hole at the Galactic Center with trajectories of bright stars. Eur. Phys. J. C \textbf{78}, 689 (2018) \href{http://arxiv.org/abs/1804.10374} {arXiv:1804.10374}

\end{thebibliography}
\end{document}